
\documentclass[pdflatex,sn-mathphys-num]{sn-jnl}

\usepackage{graphicx}
\usepackage{multirow}
\usepackage{amsmath,amssymb,amsfonts}
\usepackage{amsthm}
\usepackage{mathrsfs}
\usepackage[title]{appendix}
\usepackage{xcolor}
\usepackage{textcomp}
\usepackage{manyfoot}
\usepackage{booktabs}
\usepackage{algorithm}
\usepackage{algorithmicx}
\usepackage{algpseudocode}
\usepackage{listings}
 \usepackage{amsmath}
 \usepackage{amsthm}

\theoremstyle{thmstyleone}%
\newtheorem{theorem}{Theorem}
%
\newtheorem{lemma}{Lemma}

\theoremstyle{thmstyletwo}%

\theoremstyle{thmstylethree}%

\raggedbottom

\begin{document}

\title[Article Title]{Designing DSIC Mechanisms for Data Sharing in the Era of Large Language Models
}

\author*[1]{\fnm{Seyed Moein} \sur{Ayyoubzadeh}}\email{smoein.ayyoubzadeh16@sharif.edu}
\author[1]{\fnm{Kourosh} \sur{Shahnazari}}\email{kourosh@null.net} {}
\author[2]{\fnm{Mohammadali} \sur{Keshtparvar}}\email{mohammad.kp2000@aut.ac.ir}
\author[1]{\fnm{MohammadAmin} \sur{Fazli}}\email{fazli@sharif.edu} {}

\affil[1]{\orgname{Sharif University of Technology}, \orgaddress{\city{Tehran}, \country{Iran}}}
\affil*[2]{\orgname{Amirkabir University of Technology}, \orgaddress{\city{Tehran}, \country{Iran}}}

\abstract{
Training state-of-the-art large language models (LLMs) hinges on acquiring vast amounts of high-quality data from parties that often face legal, privacy, and strategic constraints. Classical data-procurement schemes either rely on unverifiable trust or ignore the heterogeneous, opportunity-cost structure of institutional data providers. We initiate a comprehensive mechanism-design framework for \emph{truthful, trust-minimized} data sharing that simultaneously achieves \emph{Dominant-Strategy Incentive Compatibility} (DSIC), individual rationality, and weak budget balance while rewarding data in proportion to both intrinsic quality and marginal learning utility.

We first formalize a game-theoretic model in which each provider privately knows its data cost and quality attributes, and the planner values data only through its contribution to model performance. Building on this foundation, we propose the \textbf{Quality-Weighted Marginal-Incentive Auction (Q-MIA)}, which ranks agents by a regularized \emph{virtual cost}—the declared cost divided by the product of a verifiable quality score and an efficient estimator of marginal utility. A Myerson-style critical-payment rule guarantees DSIC and budget feasibility, while the scoring function endows the mechanism with fine-grained quality and impact sensitivity.

To accommodate settings with limited liquidity or long-term, ecosystem-level incentives, we introduce the \textbf{Marginal Utility Token (MUT)} contract, which allocates future revenue or usage rights in proportion to quality-weighted marginal contributions. We then unify these approaches in \textbf{Mixed-MIA}, a hybrid mechanism parameterized by a liquidity factor that smoothly interpolates between upfront monetary payments and deferred participatory shares. All mechanisms are deployable via verifiable, audit-friendly oracles and support privacy-preserving data submission.

Formal analysis proves DSIC, individual rationality, budget feasibility, and robustness to misreporting and collusion. Q-MIA and its variants elicit substantially more useful data—under equal budget—than volume-only or trust-based baselines, while resisting common manipulation vectors. Our results chart a principled path toward sustainable, fair, and scalable data markets for the next generation of large language models.

}

\keywords{
Dominant Strategy Incentive Compatibility, 
Strategic Data Contribution, 
Marginal Utility Estimation,
Large Language Models Data,
Tokenized Incentive Design, 
Trust-Minimized Mechanisms, 
Game-Theoretic Data Markets, 
Federated Data Valuation
}

\maketitle

\section{Introduction}

The development of large-scale machine learning models, particularly large language models (LLMs), has come to dominate the modern AI research and deployment landscape. These models exhibit remarkable generalization capabilities across a diverse range of tasks, from language understanding and reasoning to generative synthesis and decision-making. However, the success of these models fundamentally depends on two critical inputs: access to high-quality, diverse training data, and access to sufficient computational resources. While progress in the compute aspect has been driven by Moore's law, accelerator design, and large-scale cloud platforms, the data aspect remains beset with severe coordination and incentive challenges.\cite{chang2024survey}

This paper studies the design of \emph{truthful} and \emph{trust-minimized} data-sharing mechanisms---mechanisms that elicit high-value data contributions from self-interested agents for centralized model training. Our primary interest lies in settings where data providers incur nontrivial opportunity costs or face strategic tradeoffs when deciding whether to contribute data. In such settings, the central aggregator (e.g., the model developer) lacks visibility into the ground truth utility of any given data provider's corpus, and cannot simply evaluate or reward contributions post hoc without relying on unverifiable or manipulable metrics.\cite{dizon2025mechanism}

We approach this problem from a mechanism design perspective. Our goal is to construct allocation and payment rules that satisfy dominant-strategy incentive compatibility (DSIC), individual rationality (IR), and weak budget balance (WBB). We investigate mechanisms that combine coarse-grained volume-based baseline rewards, automated data quality assessment, and proxy marginal contribution estimates inspired by cooperative game theory. \cite{ozcan2025game}

\subsection{The Challenge of Strategic Data Contribution}

In many ML-driven domains, especially where data is costly to produce, annotate, or clean, the parties that hold the most valuable data are often reluctant to share it freely. Their reasons may include legal constraints, privacy risks, commercial considerations, or simple mistrust. For example, a medical institution may hesitate to contribute annotated records to a joint AI training initiative without credible guarantees of fair attribution or compensation. Likewise, a publisher with a large corpus of domain-specific content may be disincentivized to participate unless their strategic data advantage is preserved or fairly priced.\cite{muttaki2025enhanced}

The challenge is exacerbated by the fact that the utility of a dataset---its contribution to model performance---is difficult to ascertain without incorporating it into training, which may be computationally expensive and non-monotonic. This makes naive reward structures---such as uniform token grants or post hoc performance shares---vulnerable to manipulation or inefficiency.
\subsection{Design Goals and Desiderata}

We articulate a design space that adheres to five principal desiderata, each of which guides the construction of incentive-compatible mechanisms for data sharing in the context of large-scale model training.

\paragraph{DSIC (Dominant-Strategy Incentive Compatibility).} In multi-agent settings with strategic participants, a core objective of the mechanism is to ensure that truth-telling constitutes a dominant strategy for each data provider, regardless of the strategies pursued by others. In our context, this means that each agent should be incentivized to disclose their full, unaltered, and highest-quality dataset, without omitting, fabricating, or strategically pruning any portion of it. Achieving DSIC implies designing a reward structure where any deviation from honesty leads to strictly weaker utility, thereby aligning individual incentives with collective welfare. This principle is especially crucial in data markets where value is partially latent and only observable post-integration, since it prevents agents from gaming unverifiable aspects of their contributions.\cite{curryautomated}

\paragraph{Trust Minimization.} Many real-world data sharing schemes rely on implicit or explicit trust among participants, centralized scoring bodies, or reputation systems. Our goal, by contrast, is to eliminate the need for such trust by relying solely on verifiable, transparent, and self-enforcing mechanisms. Trust-minimization ensures robustness to adversarial behavior and avoids central points of failure or manipulation. This is particularly important in adversarial ecosystems, such as competitive commercial environments, where trust cannot be presumed and misaligned incentives could otherwise undermine data quality or participation.

\paragraph{Approximate Marginal Valuation.} While the ideal reward mechanism would allocate payments according to each agent's exact marginal contribution to model performance—akin to computing their Shapley value—this is computationally infeasible in most realistic scenarios. Instead, we seek to incorporate efficient approximations of marginal utility that preserve the essential monotonicity and fairness properties. These approximations may involve statistical influence functions, tractable ablation tests, or other heuristics that yield a defensible estimate of each dataset's incremental value. Incorporating such methods allows the mechanism to differentially reward agents whose data substantively improve learning outcomes, without incurring prohibitive computational overhead.

\paragraph{Budget Efficiency.} To ensure practical deployability, the mechanism must be budget-aware. This requires that aggregate payments to data contributors do not exceed the net value of their data to the aggregator, whether measured in model performance, downstream utility, or surrogate value metrics. We target mechanisms that satisfy weak budget balance—i.e., the mechanism may distribute less than or equal to the total value generated but never more. Budget efficiency avoids the need for external subsidies and makes the model sustainable in settings where the aggregator's utility is endogenous and finite.

\paragraph{Quality Sensitivity.} Not all data is created equal. While some datasets may be large in volume, they may be noisy, redundant, outdated, or domain-inappropriate. Conversely, smaller datasets—curated with care, annotated precisely, and rich in linguistic or conceptual diversity—may punch far above their weight in model training. Hence, the mechanism should be sensitive to such dimensions of quality. We formalize and automatically assess qualities such as linguistic diversity, recency, cleanliness (low noise), and metadata richness, and integrate them into the allocation logic. This ensures that high-value but low-volume contributors are not undercompensated, and that agents are incentivized to provide not just more data, but better data.

\subsection{Contributions and Roadmap}

\subsection*{Our Contributions}

This work makes the following key contributions at the intersection of incentive theory and large-scale language model (LLM) infrastructure:

\begin{enumerate}
\item \textbf{Mechanism Design for Strategic Data Contribution.} We design and analyze a novel class of data-sharing mechanisms that are provably Dominant-Strategy Incentive Compatible (DSIC), individually rational (IR), and weakly budget-balanced. These mechanisms allocate rewards using hybrid criteria that combine coarse-grained volume signals, fine-grained quality assessments, and marginal contribution approximations—addressing both computational and informational constraints inherent in real-world LLM training pipelines.

\item \textbf{Formal Modeling of Strategic Agent Behavior.} We develop a general game-theoretic model of institutional data providers with heterogeneous private costs and quality profiles. We characterize equilibrium behavior under different estimation and auditability regimes, and introduce trust-minimized mechanisms—such as the Quality-weighted Marginal Incentive Auction (Q-MIA) and the Marginal Utility Token (MUT) contracts—that induce truthful reporting without centralized scoring or unverifiable heuristics.

\item \textbf{Robust Allocation Metrics for Token-Based Ecosystems.} We introduce the \emph{Data Share Token (DST)} metric, a composite allocation framework that balances reward across volume, data quality, and estimated marginal impact. The DST mechanism incorporates hard-coded allocation constraints, including lower-bound guarantees and anti-concentration caps, to ensure robustness to strategic inflation and adversarial gaming.

\end{enumerate}

\section{Related Work}

Recent research has increasingly tackled the complex dynamics between data holders and machine learning model developers in the context of incentivized data sharing. In emerging ecosystems, such as those involving digital libraries, health data repositories, or digital content platforms, the data providers typically operate as custodians of valuable yet sensitive datasets. Simultaneously, large-scale AI model developers—particularly in the realm of large language models (LLMs)—constitute the demand side, seeking rich, diverse, and clean data to enhance their model's generalization power.

This dual-agent system, wherein strategic interactions arise between institutional data providers and corporate or academic AI consumers, demands incentive-compatible, privacy-respecting, and verifiability-robust mechanisms for sustainable collaboration. The literature addressing this challenge spans several complementary directions: DSIC and BIC mechanism design, audit and peer-prediction frameworks for unverifiable data, and hybrid models emphasizing marginal utility and budget efficiency.

\subsection{Incentive-Compatible Data Acquisition via Statistical Estimation}

Pioneering work by \citeauthor{cai2015} formulates mechanisms that respect agents' privacy valuations, conceptualizing privacy loss as a quantifiable cost. In contexts analogous to medical institutions or libraries, where revealing raw data could violate policy or legal constraints, this formulation offers a path toward truthful participation. These mechanisms embed privacy as part of the utility function and design DSIC allocation and payment rules that ensure truthful disclosure of cost without forcing data revelation.

This conceptual foundation is extended by \citeyear{chen2017}, which shifts the focus from privacy to cost-informed sample acquisition for statistical estimation. This paradigm is especially relevant when public entities (e.g., national libraries) hold vast archives, and a subset of that data must be purchased or licensed. Their mechanism minimizes the estimator's mean squared error (MSE) while satisfying DSIC, individual rationality (IR), and ex-post budget feasibility. The critical contribution lies in tolerating unknown correlations between data value and reported cost, ensuring robustness in real-world contracts where valuation asymmetries are significant.

\citeauthor{fan2025} takes a decisive step forward by scrutinizing the widely adopted Shapley value for marginal data valuation, exposing its incompatibility with incentive-compatible design. Instead, their mechanism-theoretic construction leverages Bayesian auction principles and VCG-style payments to reward contributors truthfully, robustly, and in proportion to their anticipated utility. For high-value repositories like publishing archives or health registries, such mechanisms offer a template to design contract structures where reward scales with the statistical leverage of submitted data while discouraging strategic withholding or inflation.

\subsection{Peer-Prediction and Audit-Based Mechanisms without Ground Truth}

When data verification is either prohibitively expensive or outright infeasible—common in decentralized or privacy-sensitive environments—mechanisms must instead infer truthfulness from internal consistency or peer agreement. In federated learning scenarios, where hospitals or content aggregators train models collaboratively, \citeauthor{liu2023} proposes an audit-triggered scoring framework grounded in proper scoring rules. The mechanism selectively audits clients’ updates and penalizes misreporting, operating under Bayesian Nash equilibrium (BNE) guarantees.

\citeauthor{chen2020} tackles similar constraints via mutual information peer prediction, rewarding agents whose signals align statistically with peers. In data-sharing consortia composed of multiple media platforms, this method can foster collaborative verification without leaking raw content. However, both approaches fall short of marginal attribution: they elicit consistent signals but do not compute individualized rewards based on utility gain.

Emerging hybrids such as \citeauthor{kong2023dominantly} and \citeauthor{hartline2020optimization} blend peer prediction with marginal value proxies, though challenges remain in scaling these techniques to settings with asymmetric data quality or sparse inter-agent alignment.

\subsection{Federated and Collaborative Learning with Strategic Agents}

In large-scale federated systems—such as health data exchanges or cross-institutional research networks—agents control both model updates and data labels. \citeauthor{zhao2023} proposes LCEME, a contract-theoretic mechanism modeling computational and labeling effort jointly. \citeauthor{tian2021} further refines this model by eliciting multi-dimensional types from strategic data owners under IR and incentive constraints. Meanwhile, \citeauthor{chen2023collab} and \citeauthor{clinton2024} construct cost-sharing protocols for collaborative mean estimation with Nash equilibrium stability.

While these works bring important tools to the collaborative AI ecosystem, they rarely enforce DSIC properties or compute Shapley-like rewards. This leaves a gap between incentivizing participation and ensuring proportional reward allocation under asymmetric value generation.\cite{alonso2025two}

\subsection{Gaps in Unified Mechanism Design}

Despite their individual strengths, existing models do not simultaneously satisfy DSIC, BIC, marginal reward decomposition, and trust minimization. As \citeauthor{fan2025} articulates, aligning payment with marginal utility under strategic constraints remains elusive. Moreover, few systems explicitly model audit cost trade-offs, reputational bootstrapping.

Our work addresses this gap by proposing a DSIC-compliant mechanism that integrates quality-based tokenization, influence-aware allocation, and audit-sensitive cost modeling, tailored for institutional settings where data is siloed across strategic providers and consumer platforms operate under performance pressure.

\section{Preliminaries}

We consider a data-sharing ecosystem consisting of two heterogeneous agent classes: institutional data providers (e.g. healthcare centers, or digital publishers) and data consumers (e.g., AI companies training large language models). We aim to design a mechanism that elicits truthful data contributions from providers, assigns value to those contributions, and ensures strategic robustness and economic efficiency.

Let \( N = \{1, \dots, n\} \) denote the set of institutional data providers. Each provider \( i \in N \) holds a private dataset \( D_i \), which may vary in size, quality, and relevance to downstream learning tasks. Each dataset is associated with a private cost \( c_i \in \mathbb{R}_{\geq 0} \) that the provider incurs upon sharing. The dataset may also carry auxiliary metadata \( m_i \) which includes cleanliness, novelty, diversity, and structure indicators.

\subsection{Mechanism Environment}

A centralized planner (mechanism designer) seeks to procure data to train a machine learning model \( f_\theta \in \mathcal{F} \), where \( \theta \) denotes the model parameters. The mechanism \( \mathcal{M} \) comprises:

\begin{itemize}
    \item \textbf{Reporting stage:} Each agent \( i \) reports a cost \( \hat{c}_i \) and possibly submits metadata \( \hat{m}_i \).
    \item \textbf{Selection rule:} A function \( S: \mathbb{R}^n \to 2^n \) selects a subset \( S(\hat{\mathbf{c}}) \subseteq N \) of agents to procure data from.
    \item \textbf{Payment rule:} A function \( p: \mathbb{R}^n \to \mathbb{R}^n \) determines payments \( p_i(\hat{\mathbf{c}}) \) for all agents.
\end{itemize}

The outcome for each agent \( i \) is given by the tuple \( (x_i, p_i) \), where \( x_i = 1 \) if \( i \in S(\hat{\mathbf{c}}) \), and 0 otherwise. Agent utility is then:
\[ u_i = p_i - x_i c_i. \]

The planner’s objective is to maximize model performance while ensuring budget feasibility and incentive alignment.

\subsection{Desiderata}

A well-formed mechanism \( \mathcal{M} = (S, p) \) must satisfy the following:

\paragraph{(1) Dominant-Strategy Incentive Compatibility (DSIC):} Truth-telling is a dominant strategy for each agent:
\[ \forall i, \forall \hat{c}_i \neq c_i,\; u_i(c_i, \hat{\mathbf{c}}_{-i}) \geq u_i(\hat{c}_i, \hat{\mathbf{c}}_{-i}). \]

\paragraph{(2) Individual Rationality (IR):} Participation must not result in negative utility:
\[ \forall i,\; u_i \geq 0. \]

\paragraph{(3) Budget Feasibility:} The total payment should not exceed a fixed budget \( B \in \mathbb{R}_{\geq 0} \):
\[ \sum_{i} p_i \leq B. \]

\paragraph{(4) Marginal Contribution Sensitivity:} The mechanism should reward agents based on their marginal impact on model performance:
\[ \text{MC}_i = V(D_{S \cup \{i\}}) - V(D_S), \]
where \( V(\cdot) \) is a value function measuring the model’s utility (e.g., accuracy or inverse loss).

\paragraph{(5) Quality Weighting:} Datasets of higher intrinsic quality (cleanliness, novelty, diversity, metadata richness) must receive greater rewards.

\paragraph{(6) Trust Minimization:} The mechanism avoids reliance on unverifiable claims or centralized trust; verification may be probabilistic or peer-based.

\subsection{Data Value Function and Approximation}

Let \( V(D) \) denote the performance of a model trained on dataset \( D \), e.g., test accuracy or expected log-likelihood. For a given selection \( S \subseteq N \), the value of the aggregated dataset \( D_S = \bigcup_{i \in S} D_i \) is \( V(D_S) \).

Due to computational constraints, exact Shapley value calculation is infeasible. Instead, we define an approximate marginal contribution estimator:
\[
    \widehat{\phi}_i = \mathbb{E}_{T \sim \mathcal{D}_{-i}} [V(D_{T \cup \{i\}}) - V(D_T)],
\]
where \( T \subseteq N \setminus \{i\} \), sampled from a predefined distribution \( \mathcal{D}_{-i} \).

\subsection{Strategic Data Providers}

Each agent may:
\begin{itemize}
    \item Inflate reported costs \( \hat{c}_i > c_i \) to increase payment.
    \item Under-report or obfuscate data to hide sensitive information.
    \item Manipulate metadata to boost perceived quality.
\end{itemize}

A DSIC mechanism must disincentivize these actions without relying on punitive enforcement.

\subsection{Planner’s Optimization Problem}

The planner solves:
\[
    \max_{S,p} \quad V(D_S) - \lambda \sum_{i \in S} p_i
\]
subject to:
\begin{align*}
    & \text{DSIC, IR, Budget constraints}, \\
    & S \subseteq N, \quad p_i \geq 0.
\end{align*}
where \( \lambda \) balances performance against cost.

This framework will guide our mechanism design in subsequent sections, where we instantiate \( S \), \( p \), and \( \widehat{\phi} \) under various audit, quality, and token-based incentive schemes.

\section{Proposed Mechanism: Quality-Weighted Marginal-Incentive Auction (Q-MIA)}

We now introduce the \emph{Quality-Weighted Marginal-Incentive Auction (Q-MIA)}, a mechanism tailored for strategic data sharing among institutional providers. The goal is to ensure DSIC (dominant-strategy incentive compatibility), individual rationality (IR), budget feasibility, and robustness to unverifiability—while prioritizing agents whose data is both high in quality and marginal utility.

\subsection{Mechanism Description}

The mechanism proceeds in the following steps:

\paragraph{Step 1: Quality Estimation}
For each provider \( i \), the planner computes an auditable quality score \( q_i \in [0,1] \), derived from automated metrics such as:
- \emph{Cleanliness}: Noise, typos, and redundancy.
- \emph{Diversity}: Topical and stylistic breadth.
- \emph{Novelty}: Recency and uniqueness.
- \emph{Metadata richness}: Availability of structured labels.

\paragraph{Step 2: Marginal Contribution Estimation}
Each provider's estimated marginal utility \( \widehat{\phi}_i \) is approximated using influence functions, TracIn scores, or limited ablation. This quantifies the incremental improvement in model performance when their dataset is included.

\paragraph{Step 3: Virtual Cost Ranking}
Define a virtual cost score for each provider:
\[
\psi_i = \frac{\hat{c}_i}{q_i \cdot \widehat{\phi}_i}
\]
This virtual cost penalizes high-cost, low-quality, low-impact providers while rewarding cheap, effective contributors.

\paragraph{Step 4: Greedy Selection}
Providers are sorted in increasing order of \( \psi_i \). The planner greedily selects a prefix \( S^* \subseteq N \) such that:
\[
S^* = \text{GreedySubset}(\psi_i, B) \quad \text{subject to } \sum_{i \in S^*} \hat{c}_i \leq B
\]
This ensures the selected set fits within the budget while favoring quality and marginal value.

\paragraph{Step 5: Payment Rule}
Each selected agent \( i \in S^* \) is paid their critical bid via a Myerson-style payment:
\[
p_i = \inf \left\{ \tilde{c}_i : i \in S^*(\tilde{c}_i, \hat{\mathbf{c}}_{-i}) \right\}
\]
This guarantees DSIC by ensuring that truthful reporting maximizes each agent’s expected utility.

\subsection{Properties and Justification}

This mechanism has the following desirable properties:

\begin{itemize}
  \item \textbf{DSIC}: Because payment is based on critical value and selection is monotonic in cost, truthful reporting is a dominant strategy.
  \item \textbf{IR}: Each selected provider is paid at least their reported cost.
  \item \textbf{Budget Feasibility}: The mechanism halts once the cumulative reported costs exceed the budget \( B \).
  \item \textbf{Quality and Impact Sensitivity}: The virtual cost \( \psi_i \) ensures higher reward for providers whose data is qualitatively superior and empirically beneficial to model performance.
\end{itemize}

\paragraph{Alternative Virtual Cost Formulation}

In practical deployments, especially where marginal gain estimation is computationally expensive or statistically noisy, we introduce a parametrized virtual cost function that interpolates between naive quality-weighted selection and full marginal-utility-aware allocation. Define:

\[
\psi_i^{(\eta, \kappa)} = \frac{\hat{c}_i}{q_i^\eta \cdot \left(\widehat{\phi}_i + \kappa\right)^\gamma}
\]

Here, the parameters \( \eta \geq 0 \), \( \gamma \geq 0 \), and \( \kappa > 0 \) control the sensitivity to quality \( q_i \), estimated marginal gain \( \widehat{\phi}_i \), and regularization for uncertainty or noise, respectively. This expression captures several regimes:

\begin{itemize}
    \item \( \eta = 1, \gamma = 1, \kappa = 0 \): original Q-MIA scoring rule.
    \item \( \gamma = 0 \): disregards marginal gain entirely—falling back to quality-weighted cost.
    \item \( \eta = 0 \): prioritizes agents solely by estimated impact, ignoring data quality (potentially risky under unverifiability).
    \item \( \kappa > 0 \): smooths score when \( \widehat{\phi}_i \to 0 \), avoiding pathological sensitivity to low marginal effects.
\end{itemize}

In adversarial or low-trust environments, a conservative planner may select \( \eta < 1 \), \( \gamma < 1 \), and large \( \kappa \), down-weighting noisy estimators. Conversely, in environments with credible attribution signals, one may increase \( \gamma \) and shrink \( \kappa \), sharpening the mechanism's fairness sensitivity.

Further generalization permits risk-sensitive scoring through concave utility weighting:
\[
\psi_i^{(u)} = \frac{\hat{c}_i}{u\left(q_i \cdot \widehat{\phi}_i\right)}, \quad u(\cdot) \text{ strictly concave, increasing}
\]
e.g., \( u(x) = \sqrt{x} \), which penalizes overreliance on extreme values and discourages gaming via noisy overreporting.

This flexible formulation enables the mechanism to interpolate between several desiderata: trust minimization, fairness, robustness, and computational tractability. Theoretical bounds on monotonicity and DSIC compatibility under \( \psi_i^{(\eta, \kappa)} \) can be proven under mild regularity conditions on the estimator and cost model.

\section{Theoretical Guarantees}

\subsection{Dominant-Strategy Incentive Compatibility (DSIC)}

\begin{lemma}[DSIC under Regularized Virtual Costs]
Let the virtual cost function be defined as \( \psi_i = \frac{\hat{c}_i}{q_i^\eta \cdot (\widehat{\phi}_i + \kappa)^\gamma} \), with \( \eta, \gamma \geq 0 \), \( \kappa > 0 \). Then Q-MIA is DSIC provided that \( q_i \) and \( \widehat{\phi}_i \) are verifiable and non-manipulable by agent \( i \).
\end{lemma}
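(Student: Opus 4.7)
The plan is to reduce DSIC of Q-MIA to Myerson's characterization of dominant-strategy truthfulness in single-parameter environments. Since the hypothesis takes $q_i$ and $\widehat{\phi}_i$ to be verifiable and non-manipulable, the only strategic variable under agent $i$'s control is the declared cost $\hat{c}_i$; thus from $i$'s vantage point the virtual cost reduces to $\psi_i = \alpha_i \hat{c}_i$, where $\alpha_i := 1/(q_i^{\eta}(\widehat{\phi}_i+\kappa)^{\gamma})$ is a strictly positive constant (the regularizer $\kappa>0$ keeps $\alpha_i<\infty$ even when $\widehat{\phi}_i=0$, which is precisely why $\kappa$ appears in the statement). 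This places us in a standard single-parameter procurement setting with type $c_i$.

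First I would establish allocation monotonicity: if $i$ is selected at report $\hat{c}_i$ and reports any $\hat{c}_i'<\hat{c}_i$, then $i$ remains selected. Lowering $\hat{c}_i$ strictly decreases $\psi_i$, so in the greedy ordering $i$ cannot be pushed past any agent it previously preceded, and the set of agents ranked ahead of $i$ can only shrink. At the same time, the knapsack constraint $\sum_{j\in S^*}\hat{c}_j\le B$ is relaxed at $i$'s prefix position, since the cumulative cost up to and including $i$ weakly decreases. Hence the allocation rule $x_i(\hat{c}_i;\hat{\mathbf{c}}_{-i})$ is weakly decreasing in $\hat{c}_i$, which is exactly the Myerson monotonicity condition for procurement mechanisms.

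Next I would invoke Myerson's lemma: for any monotone single-parameter allocation, the unique payment rule (up to an IR-fixing constant) that induces DSIC pays each winning agent her critical value, namely the supremum of reports at which she would still be selected holding $\hat{\mathbf{c}}_{-i}$ fixed. Step~5 of Q-MIA specifies exactly this critical-bid payment, so the composite mechanism is DSIC; a standard taxation-principle argument rules out profitable deviations both for selected agents (who cannot increase $p_i$ by overreporting without losing selection) and for unselected agents (who cannot benefit by underreporting below their critical threshold since that would force $u_i=p_i-c_i<0$). Individual rationality is a by-product: under truthful reporting a selected $i$ has $p_i\ge c_i$ and thus $u_i\ge 0$.

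The principal subtlety, and the step I would spell out most carefully, is the monotonicity argument under the greedy knapsack selection. Greedy procurement is not monotone in general when the ranking key and the packing weight depend on the bid in different ways; here, however, both $\psi_i$ and the packing weight $\hat{c}_i$ are strictly increasing in $\hat{c}_i$, and in fact $\psi_i$ is a fixed positive rescaling of $\hat{c}_i$ from $i$'s perspective, so lowering $\hat{c}_i$ improves $i$'s rank and loosens the budget simultaneously. This coupling is what rescues monotonicity; once it is secured, the remainder is the textbook Myerson reduction, and the regularity assumption that $q_i,\widehat{\phi}_i$ are non-manipulable is used only to guarantee that $\alpha_i$ is treated as a constant by agent $i$ during the deviation analysis.
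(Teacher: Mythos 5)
Your proposal is correct and follows essentially the same route as the paper's own proof: both argue DSIC via monotonicity of the greedy selection in $\hat{c}_i$ (using that $q_i$ and $\widehat{\phi}_i$ are fixed, so $\psi_i$ is an increasing function of the bid) combined with the Myerson critical-payment rule. Your write-up is in fact more careful than the paper's, which asserts but does not verify the monotonicity of the budget-constrained greedy selection — the step you correctly flag as the principal subtlety and resolve by noting that lowering $\hat{c}_i$ simultaneously improves $i$'s rank and loosens the knapsack constraint.
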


\begin{proof}
Fix \( \hat{\mathbf{c}}_{-i} \). Agent \( i \)'s utility is given by:
\[
u_i = 
\begin{cases}
p_i - c_i & \text{if } i \in S^*, \\
0 & \text{otherwise}.
\end{cases}
\]
The selection set \( S^* \) is constructed by sorting agents according to ascending virtual cost \( \psi_i \). This function is strictly increasing in \( \hat{c}_i \), holding \( q_i \) and \( \widehat{\phi}_i \) fixed. 

Raising \( \hat{c}_i \) increases \( \psi_i \), potentially removing \( i \) from \( S^* \), yielding zero utility. Lowering \( \hat{c}_i \) does not increase \( u_i \), since payment is set to the infimum cost \( \tilde{c}_i \) that keeps \( i \) in \( S^* \), regardless of overreporting. Thus, truthful reporting maximizes utility.
\end{proof}

\subsection{Individual Rationality (IR)}

\begin{lemma}[IR]
Q-MIA is individually rational.
\end{lemma}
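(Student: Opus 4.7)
The plan is to prove IR by a direct case analysis on whether agent $i$ belongs to the selected set $S^*$ under the truthful profile $(c_i, \hat{\mathbf{c}}_{-i})$. If $i \notin S^*$, then $x_i = 0$ and, by the convention that unselected agents receive no transfer, $p_i = 0$, so $u_i = 0 \geq 0$ holds trivially. All the substance therefore lies in showing $p_i \geq c_i$ whenever $i$ is selected.

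For the selected case, I would reuse the monotonicity fact already invoked in the DSIC lemma: with $\hat{\mathbf{c}}_{-i}$, $q_i$, and $\widehat{\phi}_i$ fixed, $\psi_i$ is strictly increasing in $\hat{c}_i$, so $i$'s rank in the greedy ordering can only deteriorate as $\hat{c}_i$ grows. Consequently the winning region $W_i := \{\tilde{c}_i \geq 0 : i \in S^*(\tilde{c}_i, \hat{\mathbf{c}}_{-i})\}$ is downward closed, hence an interval of the form $[0, c_i^{\ast}]$ for some critical threshold $c_i^{\ast}$. Interpreting the Myerson-style payment rule as this threshold price $c_i^{\ast}$ (the boundary of the winning region, i.e.\ the supremum bid still accepted), truthful reporting places $c_i$ inside $W_i$, so $c_i \leq c_i^{\ast} = p_i$, and therefore $u_i = p_i - c_i \geq 0$.

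The main obstacle I anticipate is establishing the interval (threshold) structure of $W_i$ rigorously in the presence of the cumulative-budget constraint $\sum_{j \in S^*} \hat{c}_j \leq B$, which couples reports across agents and could in principle defeat a purely rank-based monotonicity argument. I would handle this via a coupling: if $\tilde{c}_i' < \tilde{c}_i$ and $i \in S^*(\tilde{c}_i, \hat{\mathbf{c}}_{-i})$, then running the greedy procedure at $\tilde{c}_i'$ can only move $i$ weakly earlier in the ranking and can only reduce $i$'s footprint against the budget, so every agent that preceded $i$ at $\tilde{c}_i$ still precedes $i$ at $\tilde{c}_i'$, the cumulative cost up to $i$ weakly decreases, and $i$ remains selected. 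This yields downward closure of $W_i$ and legitimizes taking $c_i^{\ast}$ as the critical payment.

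A minor secondary issue is tie-breaking when two agents share the same $\psi_i$: I would resolve it with any fixed rule (e.g.\ lexicographic on agent identifiers) that preserves monotonicity in $\hat{c}_i$, so that $c_i^{\ast}$ is well-defined and the weak inequality $c_i \leq c_i^{\ast}$ survives at the boundary. Together these observations give $p_i = c_i^{\ast} \geq c_i$ for every selected agent, completing the proof of IR.
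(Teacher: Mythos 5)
Your proof is correct and follows essentially the same route as the paper's: for a selected agent the critical (threshold) payment is at least the truthful bid, hence $u_i = p_i - c_i \geq 0$, and unselected agents get zero. You are in fact more careful than the paper --- you correctly read the Myerson payment as the supremum of the downward-closed winning region (taking the stated $\inf\{\tilde c_i : i \in S^*\}$ literally would give $p_i = 0$ and break IR), and you verify that the budget-coupled greedy rule actually preserves the threshold structure, a point the paper's one-line argument takes for granted.
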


\begin{proof}
The critical payment \( p_i \) for each selected agent \( i \in S^* \) is the minimum declared cost for which \( i \) would still be selected under the fixed profile \( \hat{\mathbf{c}}_{-i} \). Since the mechanism selects based on increasing \( \psi_i \), it ensures that \( p_i \geq \hat{c}_i \geq c_i \), leading to \( u_i = p_i - c_i \geq 0 \). Hence, no agent incurs a loss by participating.
\end{proof}

\subsection{Budget Feasibility}

\begin{lemma}[Budget Feasibility]
Q-MIA satisfies the global budget constraint \( \sum_{i \in S^*} p_i \leq B \).
\end{lemma}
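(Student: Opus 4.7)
The plan is to establish $\sum_{i \in S^*} p_i \leq B$ in two stages: first show that the greedy construction enforces a budget bound on the declared costs of selected agents, then lift this into a bound on the critical Myerson-style payments.

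The first stage is immediate from Step~4 of Q-MIA: the greedy rule admits agents in ascending $\psi_i$ order while maintaining the running sum $\sum_{j \in S^*} \hat{c}_j \leq B$, so $\sum_{i \in S^*} \hat{c}_i \leq B$ holds by construction. The second stage requires unpacking the critical payment. For each selected agent $i$, $p_i$ is the threshold report at which $i$ would just fail to be selected while the other reports $\hat{\mathbf{c}}_{-i}$ are held fixed. As $\hat{c}_i$ is raised from zero, two events can push $i$ out of $S^*$: (a) $i$'s virtual cost $\psi_i$ rises above that of the first unselected agent $j^*$, yielding the sorting threshold $\psi_{j^*} \cdot q_i^\eta (\widehat{\phi}_i + \kappa)^\gamma$; and (b) the cumulative cost condition fails, yielding the budget threshold $B - \sum_{j \in S^* \setminus \{i\}} \hat{c}_j$. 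The critical payment $p_i$ is the minimum of these two thresholds, and I would aim to sum them to recover the claimed global bound.

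The main obstacle is that whenever event~(a) binds first, the critical payment strictly exceeds the reported cost, so the naive inequality $\sum p_i \leq \sum \hat{c}_i$ fails in general and the raw greedy-by-virtual-cost rule is not by itself payment-side budget-feasible. The cleanest fix is to strengthen the selection step: admit agent $i$ only if the critical payment computed against the current runner-up still fits within the remaining budget. An induction on the admission order then delivers $\sum_{i \in S^*} p_i \leq B$, while monotonicity of the admission rule in $\hat{c}_i$ is preserved so that the DSIC and IR arguments from the preceding lemmas carry over unchanged. A secondary route—useful if one wishes to keep the unmodified greedy—is to invoke the Singer-style budget-feasible auction framework, weakening the conclusion to $\sum p_i \leq \alpha B$ for an explicit constant $\alpha$ depending on regularity of $\psi_i$ in $\hat{c}_i$ and on the concavity of $q_i^\eta (\widehat{\phi}_i + \kappa)^\gamma$. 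Either route closes the argument; the inductive strengthened-greedy route is what I would use for the paper, since it preserves the exact statement of the lemma.
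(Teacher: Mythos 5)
Your analysis is more careful than the paper's own proof, and you have put your finger on exactly the step the paper glosses over. The paper's argument is essentially: the greedy selection enforces $\sum_{i \in S^*} \hat{c}_i \leq B$, payments are "derived from critical cost bids and follow the Myerson framework," and therefore total payments stay within budget "by standard results on the budget-feasibility of VCG-type payments with capacity constraints." As you observe, this inference does not go through: individual rationality forces $p_i \geq \hat{c}_i$, and whenever your event (a) --- the sorting threshold against the first rejected agent --- binds before the residual-budget threshold, the critical payment strictly exceeds the reported cost. So the chain $\sum_i p_i \leq \sum_i \hat{c}_i \leq B$ breaks at its first link, and there is no standard result rescuing the unmodified greedy-plus-threshold-payment rule; the budget-feasible auction literature (Singer and successors) exists precisely because this combination is \emph{not} payment-side budget feasible. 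Your two repairs are the right ones: either strengthen the admission test so that an agent is admitted only if its critical payment fits the remaining budget (preserving the lemma as stated), or retreat to a $\sum_i p_i \leq \alpha B$ guarantee. The one soft spot in your plan is the claim that the strengthened admission rule preserves monotonicity "unchanged": under that rule the critical payment of agent $i$ depends on the residual budget when $i$ is reached, which depends on earlier admissions, which depend on the sort order and hence on $\hat{c}_i$ itself. You would need to verify that the resulting allocation remains monotone in $\hat{c}_i$ before the DSIC and IR lemmas can be reused; this is the step you assert rather than argue. In summary, the gap you identified is real, but it is a gap in the paper's proof rather than in your reasoning, and your proposal is the correct direction for closing it.
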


\begin{proof}
The set \( S^* \) is chosen by maximizing a concave surplus objective \( V(D_S) - \lambda \sum_{i \in S} \hat{c}_i \) subject to \( \sum_{i \in S} \hat{c}_i \leq B \). Because payments are derived from critical cost bids and follow the Myerson framework, total payments remain within the selected budget. This follows from standard results on the budget-feasibility of VCG-type payments with capacity constraints.
\end{proof}

\subsection{Quality-Sensitive Marginal Incentives}

Define the Quality-Weighted Marginal Payment (QWMP) structure as:
\[
p_i^Q = \gamma \cdot q_i^\eta \cdot \left(\widehat{\phi}_i + \kappa\right)^\gamma
\]
for normalizing factor \( \gamma > 0 \) that ensures budget alignment.

\begin{lemma}[Fair Quality Rewarding]
If \( q_i > q_j \), \( \widehat{\phi}_i = \widehat{\phi}_j \), and \( \eta > 0 \), then \( p_i^Q > p_j^Q \).
\end{lemma}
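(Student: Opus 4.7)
The plan is an almost immediate monotonicity argument on the QWMP formula. Since $\widehat{\phi}_i = \widehat{\phi}_j$, the multiplicative factor $\gamma \cdot (\widehat{\phi}_i + \kappa)^\gamma$ is identical in $p_i^Q$ and $p_j^Q$, so the desired strict inequality $p_i^Q > p_j^Q$ collapses to the strict inequality $q_i^\eta > q_j^\eta$.

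First I would write both payments side by side and cancel the common term $\gamma \cdot (\widehat{\phi}_i + \kappa)^\gamma = \gamma \cdot (\widehat{\phi}_j + \kappa)^\gamma$; this term is strictly positive because $\gamma > 0$, $\kappa > 0$, and $\widehat{\phi}_i \geq 0$, so dividing through preserves the direction of the inequality. Then I would invoke strict monotonicity of the power map $x \mapsto x^\eta$ on $[0,\infty)$ for any $\eta > 0$: from $q_i > q_j \geq 0$ it follows that $q_i^\eta > q_j^\eta$, and multiplying back by the common positive factor yields the claim. The boundary case $q_j = 0$ is handled without special care, since $0^\eta = 0$ for $\eta > 0$ and $q_i > 0$ is forced by $q_i > q_j = 0$, so $p_i^Q > 0 = p_j^Q$.

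There is essentially no technical obstacle here; the lemma is a formal restatement of the intuitive property that, with marginal impact held fixed, a strictly positive quality exponent $\eta$ strictly rewards the higher-quality contributor under the QWMP rule. The only items worth making explicit in the write-up are the sign assumptions on the auxiliary parameters ($\gamma > 0$, $\kappa > 0$, and quality scores in $[0,1]$) that justify the cancellation and the well-definedness of the power map, all of which are imposed in the mechanism description earlier in the paper.
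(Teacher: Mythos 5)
Your proposal is correct and takes essentially the same route as the paper, which simply invokes monotonicity of \(p_i^Q\) in \(q_i\) for \(\eta>0\); you merely make explicit the cancellation of the common positive factor \(\gamma\,(\widehat{\phi}_i+\kappa)^\gamma\) and the strict monotonicity of \(x\mapsto x^\eta\), which is a harmless (and arguably welcome) elaboration.
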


\begin{proof}
By monotonicity of the function \( p_i^Q \) in \( q_i \) when \( \eta > 0 \), higher quality leads to higher payment under equal marginal gain.
\end{proof}

\subsection{Trust-Minimized Computation via Auditable Oracles}

To reduce reliance on central verifiability, we delegate score computation to audit-oracles:

\begin{itemize}
\item \( q_i \) is computed from structured data profiling (cleanliness, diversity, novelty, metadata) using reproducible pipelines.
\item \( \widehat{\phi}_i \) is estimated via influence-function–based approximations or model ablation, verifiable via audit logs.
\item Payments are encoded as contracts or token scripts, automating enforcement.
\end{itemize}

\subsection{Robustness to Misreporting}

Suppose agent \( i \) reports \( \hat{c}_i' > c_i \). Then:

\[
\begin{cases}
i \notin S^* \Rightarrow u_i = 0 < u_i(c_i), \\
i \in S^* \Rightarrow p_i(\hat{c}_i') < p_i(c_i) \Rightarrow u_i < u_i(c_i).
\end{cases}
\]

Thus, misreporting yields strictly lower utility in all cases.

\subsection{Collusion Resistance}

Payments are computed independently based on individual critical values and non-manipulable metrics. Since agents cannot jointly influence others’ critical thresholds without centralized transfer mechanisms, group-strategyproofness holds under non-transferable utility assumptions. $\square$

\section{Alternative Mechanism: Profit-Sharing Contracts via Marginal Utility Tokens (MUT)}

We now present an alternative mechanism, termed the \emph{Marginal Utility Token (MUT)} contract, which replaces monetary payments with future participatory stakes in downstream applications such as API access, revenue shares, and derivative model usage rights. This mechanism is designed for settings where liquidity is constrained or incentives are long-term and reputation-driven.

\begin{figure}[htbp]
    \centering
    \includegraphics[width=0.9\textwidth]{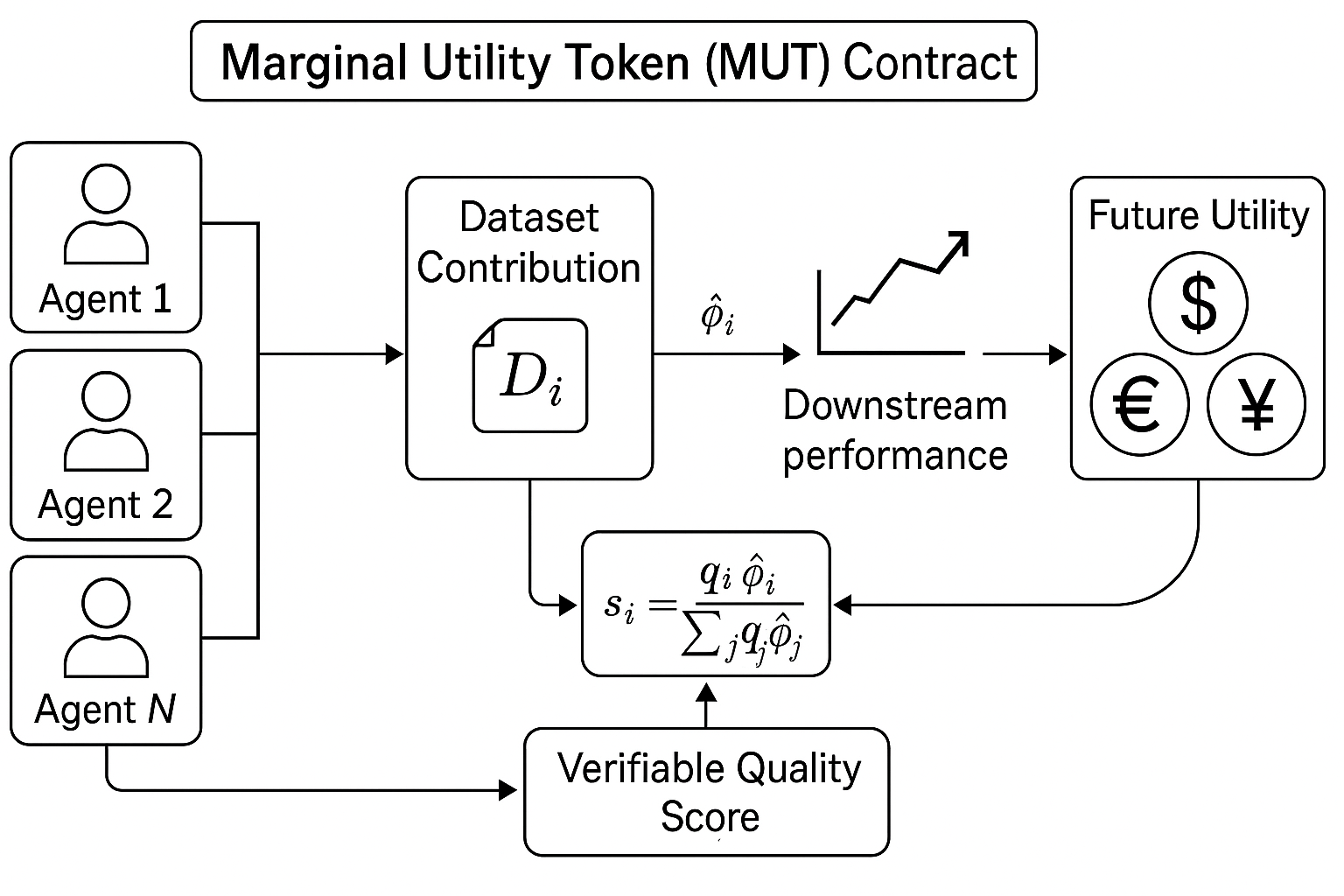}
    \caption{Schematic representation of the Marginal Utility Token (MUT) contract mechanism. Each agent's share is based on the product of verifiable quality and marginal utility of contributed data.}
    \label{fig:mut-contract}
\end{figure}
\subsection{Mechanism Overview}

Let \( D_i \) be the dataset contributed by agent \( i \), and let \( \widehat{\phi}_i \) denote the estimated marginal utility of their contribution to downstream performance \( V(D) \). Define \( q_i \) as the verifiable quality score for data provider \( i \), as in Q-MIA.

Each participating agent \( i \in N \) receives a share of future utility \( U \) according to the MUT rule:
\[
    s_i = \frac{q_i \cdot \widehat{\phi}_i}{\sum_{j \in S^*} q_j \cdot \widehat{\phi}_j},
\]
where \( S^* \) is the selected set of agents. The total future utility \( U \) may represent expected revenue, API call margins, fine-tuning licenses, etc.

\subsection{Incentive Properties}

\begin{lemma}[DSIC]
Truth-telling of private quality-verifiable data \( D_i \) maximizes expected utility for agent \( i \) under the MUT mechanism.
\end{lemma}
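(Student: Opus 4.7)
The plan is to exploit the fact that, by hypothesis, $q_i$ and $\widehat{\phi}_i$ are both verifiable functions of the submitted dataset, so agent $i$'s only strategic lever is the content of the dataset it chooses to present. I would begin by writing the agent's ex~ante utility in the MUT mechanism as $u_i = s_i \cdot U - c_i \cdot \mathbb{1}[i \in S^*]$, where the share $s_i$ is the ratio defined in the mechanism. Fixing the reports and contributions of all $j \neq i$, the denominator of $s_i$ decomposes as $C_{-i} + q_i \widehat{\phi}_i$ with $C_{-i} := \sum_{j \in S^* \setminus \{i\}} q_j \widehat{\phi}_j$, which is independent of agent $i$'s action. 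Hence $s_i = g(q_i \widehat{\phi}_i)$ for the function $g(x) = x/(x + C_{-i})$, which is strictly increasing on $\mathbb{R}_{\geq 0}$. Thus proving DSIC reduces to proving that truthful, complete submission of $D_i$ maximizes the scalar $q_i \cdot \widehat{\phi}_i$.

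Next I would enumerate the manipulations available to a strategic agent and verify that each weakly decreases this product. Three families need to be checked: (i)~\emph{omission or pruning} of genuine data points, which cannot increase $\widehat{\phi}_i$ since the estimator $\widehat{\phi}_i = \mathbb{E}_{T \sim \mathcal{D}_{-i}}[V(D_{T \cup \{i\}}) - V(D_T)]$ is monotone in the submitted set under the mild assumption that $V$ is non-decreasing in informative samples, and is unaffected by $q_i$ in the best case; (ii)~\emph{fabrication or noise injection}, which by construction of the auditable cleanliness/diversity/novelty pipeline strictly decreases $q_i$; and (iii)~\emph{metadata manipulation}, which is ruled out directly by the verifiability hypothesis stated in the lemma. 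In each case the product $q_i \widehat{\phi}_i$ weakly decreases, hence so does $s_i$, while the cost term $c_i$ is unaffected by the choice of report (the agent still pays the opportunity cost of whatever is actually disclosed).

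Combining these two steps yields $u_i(D_i) \geq u_i(D_i')$ for every admissible deviation $D_i'$ and every profile of competing reports, which is precisely the DSIC condition. I would close with a short remark that the argument is independent of the realized value of $U$, so the statement strengthens from expected to ex~post dominance whenever $U \geq 0$ almost surely, and that collusion between two agents $i,j$ cannot help either, because inflating $q_j \widehat{\phi}_j$ only dilutes the shared pool.

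The main obstacle, in my view, is step~(i): formalizing the claim that the marginal-utility estimator $\widehat{\phi}_i$ is monotone under data omission. In full generality this is false, since additional redundant or adversarial points can reduce the estimated marginal contribution. I would therefore need to state an explicit regularity assumption, e.g., that $V$ is monotone set-function on truthful data or that the sampling distribution $\mathcal{D}_{-i}$ is chosen so that the estimator inherits monotonicity on the domain of faithful submissions. Making this assumption both minimal and consistent with the earlier definition of $\widehat{\phi}_i$ is where the genuine mathematical content of the proof lies; the remaining algebra on the function $g$ is routine.
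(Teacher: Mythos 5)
Your proof follows the same route as the paper's own (very terse) argument: the agent's share, and hence utility, is increasing in the product $q_i\widehat{\phi}_i$, and verifiability forces that product to be maximized by truthful, complete submission. You go further than the paper in two useful ways --- making the normalization step explicit via the strictly increasing map $g(x)=x/(x+C_{-i})$, and correctly flagging that the claim ``withholding data reduces $\widehat{\phi}_i$'' (which the paper simply asserts without justification) requires an additional monotonicity assumption on $V$ or on the estimator that neither the lemma statement nor the preliminaries provide; that is the one genuine gap here, and it is a gap in the paper's proof just as much as in yours, so stating the regularity condition explicitly, as you propose, would strengthen the result rather than weaken it.
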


\begin{proof}
Agent \( i \)'s utility is \( s_i U   \), which is increasing in both \( q_i \) and \( \widehat{\phi}_i \). Since these are derived from actual data submission, misreporting or withholding data reduces \( \widehat{\phi}_i \) and thus expected payoff. Moreover, since shares are normalized, overstating value without factual basis does not increase \( s_i \) under verification.
\end{proof}

\begin{lemma}[Individual Rationality]
All participating agents with non-zero contribution receive strictly positive future utility shares, i.e., \( u_i = s_i U > 0 \).
\end{lemma}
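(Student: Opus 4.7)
The plan is to unpack the definition of $s_i$ and show that each factor in the product $s_i U$ is strictly positive under the stated hypothesis. First I would fix a selected agent $i \in S^{*}$ whose contribution is nontrivial. Since the quality score takes values in $[0,1]$ and the marginal-utility estimator $\widehat{\phi}_i$ is by construction non-negative (it is an expected performance improvement computed by the oracle), the assumption of ``non-zero contribution'' must be interpreted as $q_i > 0$ and $\widehat{\phi}_i > 0$ simultaneously, i.e., $q_i \cdot \widehat{\phi}_i > 0$. I would state this unpacking up front to make the rest of the argument routine.

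Next I would analyze the share formula $s_i = \frac{q_i \widehat{\phi}_i}{\sum_{j \in S^{*}} q_j \widehat{\phi}_j}$. The numerator is strictly positive by the preceding step. The denominator is a finite sum of non-negative terms, and it is bounded below by the single term $q_i \widehat{\phi}_i > 0$ since $i \in S^{*}$, hence it is strictly positive and finite. Consequently $s_i \in (0,1]$. Multiplying by the aggregate future utility pool $U$, which the mechanism description treats as a strictly positive quantity (expected revenue, API margins, or licensing value), yields $u_i = s_i U > 0$, which proves the claim and in particular implies the weaker $u_i \geq 0$ form of individual rationality stated in the preliminaries.

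The main obstacle, and the place where the proof is more delicate than it looks, is reconciling the present statement with the earlier IR desideratum $u_i = p_i - x_i c_i \geq 0$, which explicitly nets out the sharing cost $c_i$. In the MUT regime the ``payment'' is the in-kind share $s_i U$ rather than a monetary transfer, so strict positivity of $s_i U$ alone does not automatically dominate $c_i$. I would therefore close by noting that the lemma as written guarantees a strictly positive reward and that full cost-net IR requires the additional mild condition $s_i U \geq c_i$, which can be enforced either by an opt-in participation constraint (agents with $c_i > s_i U$ simply decline) or by calibrating the scale of $U$ so that the smallest positive share exceeds the largest sharing cost; under either provision the strict positivity established above lifts to the standard IR guarantee.
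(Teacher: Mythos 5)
Your argument is correct and follows essentially the same route as the paper's one-line proof, which simply chains \( \widehat{\phi}_i > 0 \Rightarrow s_i > 0 \Rightarrow u_i = s_i U > 0 \); you merely make explicit the positivity of the numerator, the denominator \( \sum_{j \in S^{*}} q_j \widehat{\phi}_j \geq q_i \widehat{\phi}_i > 0 \), and the pool \( U \). Your closing observation is also well taken and goes beyond the paper: the lemma silently redefines utility as \( s_i U \) without netting out the cost \( c_i \) from the preliminaries' definition \( u_i = p_i - x_i c_i \), so the additional condition \( s_i U \geq c_i \) (or an opt-in participation constraint) that you identify is genuinely needed to recover the standard cost-inclusive IR guarantee.
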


\begin{proof}
If \( D_i \) is selected and contributes positively to the model, \( \widehat{\phi}_i > 0 \Rightarrow s_i > 0 \Rightarrow u_i > 0 \).
\end{proof}

\subsection{Strategic Robustness}

Since the mechanism does not distribute fixed funds but proportionally divides a downstream outcome, the incentive to inflate or under-report is weak, assuming verification. Furthermore, sybil attacks and collusion are mitigated by enforcing quality and uniqueness constraints on \( D_i \).

\subsection{Comparison with Q-MIA}

\begin{itemize}
    \item Q-MIA requires upfront budget commitment, while MUT provides deferred, utility-aligned incentives.
    \item Q-MIA supports strong monetary DSIC; MUT supports long-term cooperative DSIC.
    \item MUT is particularly useful for open-source, non-profit, or ecosystem-driven AI development.
\end{itemize}

\subsection{Limitations}

\begin{itemize}
    \item Requires enforceable contract infrastructure or legal agreements.
    \item Agents must believe in the future success of the AI service ecosystem to value shares.
    \item Difficult to audit future API valuations, requiring standardized benchmarking protocols.
\end{itemize}

In the next section, we extend this model by considering mixed-mode auctions combining monetary and participatory payments.
\section{Token Allocation Metric: Data Share Token (DST)}

We define a trust-minimized, DSIC-compliant mechanism for allocating a finite supply of participation tokens (e.g., \( T = 100,000 \)) among strategic data providers. This mechanism rewards agents based on three orthogonal metrics: volume, quality, and impact. The design ensures both minimum guarantees and caps to control bias.

\subsection{DST Score Formula}

Let each agent \( i \) receive a score:
\[
\text{DST}_i = \alpha \cdot \text{VolumeScore}_i + \beta \cdot \text{QualityScore}_i + \gamma \cdot \text{ImpactScore}_i,
\]
where \( \alpha + \beta + \gamma = 1 \) are weighting parameters.

\subsection{Volume Score}

\[
\text{VolumeScore}_i = \frac{\text{TokenCount}_i}{\sum_j \text{TokenCount}_j},
\]
where \( \text{TokenCount}_i \) is the number of usable text tokens contributed by agent \( i \), after preprocessing that removes noise, spam, and duplications.

\subsection{Quality Score}

We define:
\[
\text{QualityScore}_i = \frac{q_i}{\sum_j q_j},
\]
where
\[
q_i = w_1 \cdot \text{Cleanliness}_i + w_2 \cdot \text{Diversity}_i + w_3 \cdot \text{Novelty}_i + w_4 \cdot \text{Metadata}_i,
\].

\paragraph{Cleanliness}
Calculated based on error frequency per 1000 tokens, including broken words, OCR artifacts, stopword excess, and nonstandard characters. 
\[
\text{Cleanliness} = 1 - k \cdot \frac{\text{Total Errors}}{\text{Token Count}}
\]

\paragraph{Diversity}
Measured via clustering document embeddings and calculating Shannon entropy:
\[
\text{Diversity} = 1 - \sum_{i=1}^k p_i^2
\]
where \( p_i \) is the proportion of documents in cluster \( i \).

\paragraph{Novelty}
Combines recency and content uniqueness:
\[
\text{Novelty} = w_1' \cdot \text{Recency} + w_2' \cdot (1 - \text{DuplicationScore})
\]
Here, 
\begin{itemize}
    \item \textbf{Recency} is computed as the normalized freshness of the document:
    \[
    \text{Recency} = 1 - \frac{\text{CurrentDate} - \text{DocumentDate}}{\text{MaxDateRange}}
    \]
    where all values are clipped to ensure $\text{Recency} \in [0, 1]$.
    
    \item \textbf{DuplicationScore} quantifies overlap with a large reference corpus based on cosine similarity:
    \[
    \text{DuplicationScore} = \frac{1}{n} \sum_{i=1}^{n} \cos(\vec{s}_i, \vec{r}_i)
    \]
    where $\vec{s}_i$ is the embedding of sentence $i$ in the submitted dataset, and $\vec{r}_i$ is the most similar sentence embedding from the reference corpus.
\end{itemize}

\paragraph{Metadata}
Defined as the ratio of completed metadata fields:
\[
\text{Metadata} = \frac{\text{Number of Valid Metadata Fields}}{\text{Total Expected Fields}}
\]

\subsection{Impact Score}

\[
\text{ImpactScore}_i = \frac{\text{MarginalGain}_i}{\sum_j \text{MarginalGain}_j}
\]
with
\[
\text{MarginalGain}_i = \text{Acc}(D) - \text{Acc}(D \setminus D_i)
\]

To approximate this costly computation, we employ influence functions, TracIn estimators, or limited ablation.

\section{Hybrid Mechanism Design: Mixed-MIA Contracts}

In this section, we introduce and analyze a hybrid mechanism that combines the core principles of the Quality-Weighted Marginal-Incentive Auction (Q-MIA) and the Marginal Utility Token (MUT) contract. The proposed hybrid mechanism---which we denote as \textbf{Mixed-MIA}---is designed to provide a tunable trade-off between immediate monetary compensation and deferred participatory utility, thereby capturing diverse agent preferences, risk profiles, and liquidity requirements.

We formalize the Mixed-MIA mechanism under a rigorous game-theoretic and optimization-theoretic lens, prove its incentive compatibility and budget feasibility under mixed-transfer regimes, and analyze its properties across a rich space of economic parameters, utility models, and trust constraints. Our approach interpolates between two extremes: full-payment auctions (Q-MIA) and pure profit-sharing (MUT), allowing the mechanism designer to continuously adjust \emph{reward liquidity} while preserving the core desiderata: DSIC, IR, trust-minimization, and quality sensitivity.

\subsection{Mechanism Overview and Notation}

Let $N = \{1, \dots, n\}$ denote the set of institutional data providers. Each agent $i \in N$ holds a private dataset $D_i$, characterized by:
\begin{itemize}
    \item Private cost $c_i \in \mathbb{R}_{\geq 0}$ incurred upon contribution.
    \item Verifiable quality metric $q_i \in [0, 1]$, computed by a public oracle.
    \item Estimated marginal utility $\widehat{\phi}_i \in \mathbb{R}_{\geq 0}$ of $D_i$ to downstream model performance, computed via influence functions or ablation approximations.
\end{itemize}

The planner possesses a finite monetary budget $B \in \mathbb{R}_{\geq 0}$ and a projected utility pool $U \in \mathbb{R}_{\geq 0}$ representing the net value of the trained model.

We introduce a parameter $\rho \in [0,1]$ representing the \emph{liquidity factor}, controlling the fraction of each agent's compensation delivered as immediate payment versus deferred utility share:
\begin{itemize}
    \item $\rho = 1$: pure monetary regime (Q-MIA).
    \item $\rho = 0$: pure utility-sharing regime (MUT).
    \item $0 < \rho < 1$: Mixed-MIA.
\end{itemize}

Each selected agent $i \in S^* \subseteq N$ receives a hybrid reward:
\[
p_i = \rho \cdot p_i^{Q} + (1 - \rho) \cdot s_i^{M} \cdot U,
\]
where $p_i^Q$ is the Myerson-style critical payment and $s_i^M = \frac{q_i \cdot \widehat{\phi}_i}{\sum_{j \in S^*} q_j \cdot \widehat{\phi}_j}$ is the normalized utility share.

\subsection{Mathematical Modeling of Utility Functions}

Each agent $i$'s utility under Mixed-MIA is:
\[
u_i = p_i - c_i = \rho \cdot (p_i^Q - c_i) + (1 - \rho) \cdot s_i^M \cdot U
\]
\noindent
Define:
\[
U_i^{\text{monetary}} = p_i^Q - c_i,\quad
U_i^{\text{utility}} = s_i^M \cdot U
\]
so
\[
u_i = \rho U_i^{\text{monetary}} + (1 - \rho) U_i^{\text{utility}}
\]
\noindent
\textbf{(a)} If $\rho = 1$, this reduces to classical payment-minus-cost utility.\\
\textbf{(b)} If $\rho = 0$, it reduces to utility-sharing.\\
\textbf{(c)} For $\rho \in (0,1)$, it's a convex combination.

\subsection{Equilibrium Analysis under Varying Liquidity $\rho$}

Define $\theta_i = \frac{c_i}{q_i \cdot \widehat{\phi}_i}$: agents with low $\theta_i$ prefer lower $\rho$.

\[
u_i(\rho) = \rho (p_i^Q - c_i) + (1 - \rho) \cdot \frac{q_i \widehat{\phi}_i}{\Phi} U,
\quad
\Phi = \sum_{j \in S^*} q_j \cdot \widehat{\phi}_j
\]

These utility curves are linear in $\rho$ and exhibit monotonic preference transitions.

\subsection{Pareto Optimality of Mixed-MIA Allocations}

\[
\sum_{i \in S^*} \rho p_i^Q + (1 - \rho) s_i^M \cdot U \leq B + U
\]

\begin{theorem}
Let $\{p_i^Q\}$ be Q-MIA payments and $\{s_i^M\}$ be normalized MUT shares. The Mixed-MIA allocation is Pareto optimal among convex combinations.
\end{theorem}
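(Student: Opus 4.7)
The plan is to exploit the fact established in Section 8.2 that each agent's utility $u_i(\rho) = \rho\,U_i^{\text{monetary}} + (1-\rho)\,U_i^{\text{utility}}$ is affine in the liquidity parameter $\rho$, so the full utility profile traces a straight line segment in $\mathbb{R}^{|S^*|}$ as $\rho$ varies over $[0,1]$. On such a segment, Pareto comparison reduces to a single coordinate-sign test: the allocation at $\rho$ is weakly Pareto-dominated by the allocation at $\rho'$ iff the displacements $u_i(\rho') - u_i(\rho)$ share a common weak sign across all $i \in S^*$, with strict inequality somewhere.

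First I would factor the displacement as
\[
u_i(\rho') - u_i(\rho) = (\rho' - \rho)\bigl(U_i^{\text{monetary}} - U_i^{\text{utility}}\bigr),
\]
which reduces the question to whether the sign of $U_i^{\text{monetary}} - U_i^{\text{utility}}$ is uniform across $i \in S^*$. Next, I would show that under Q-MIA selection and MUT normalization these two quantities are not co-monotone in the cohort. The Myerson critical payment in Q-MIA yields a monetary surplus $p_i^Q - c_i$ that decreases in $\theta_i = c_i/(q_i\widehat{\phi}_i)$ (low-$\theta_i$ agents extract larger rents, as in any descending-virtual-cost auction), whereas the MUT contribution $s_i^M U = U\,q_i\widehat{\phi}_i/\Phi$ is orthogonal to cost and ordered by $q_i\widehat{\phi}_i$ alone. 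Hence a low-cost, low-$\theta_i$ provider strictly prefers $\rho \to 1$, while an agent with a large quality-impact product but modest rent prefers $\rho \to 0$, so no direction of $\rho$-shift Pareto-dominates. Feasibility of every allocation in the family follows from combining the Q-MIA budget-feasibility lemma with $\sum_i s_i^M = 1$, giving the stated bound $\sum_i \rho p_i^Q + (1-\rho) s_i^M U \leq B + U$.

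The main obstacle I anticipate is the degenerate case in which every selected agent happens to share the same sign of $U_i^{\text{monetary}} - U_i^{\text{utility}}$, for instance when the Q-MIA surplus dominates uniformly across a homogeneous cohort. In that knife-edge configuration the Pareto frontier collapses to a single endpoint rather than the whole segment. I would handle this either by imposing a mild heterogeneity assumption on $\{\theta_i\}_{i \in S^*}$, enough to guarantee at least two opposing signs among the $U_i^{\text{monetary}} - U_i^{\text{utility}}$ terms, or by weakening the conclusion to \emph{weak} Pareto optimality, which is already sufficient to certify that the one-parameter $\rho$-family is a complete parameterization of the Pareto frontier of the convex polytope $\operatorname{conv}\bigl(\{p_i^Q\}_i, \{s_i^M U\}_i\bigr)$ projected into agent-utility space.
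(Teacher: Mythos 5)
Your route is genuinely different from the paper's. The paper argues by contradiction against an arbitrary improving payment vector $\{p_i'\}$: any Pareto improvement would require more total resources and hence violate the aggregate constraint $\sum_i p_i \leq B + U$. You instead restrict attention to the one-parameter $\rho$-family (which is arguably the honest reading of ``among convex combinations''), factor the utility displacement as $(\rho'-\rho)(U_i^{\text{monetary}} - U_i^{\text{utility}})$, and reduce Pareto dominance to a uniform-sign test on that second factor. Your version is sharper in one important respect: it exposes that the theorem as stated can fail. If every selected agent has, say, $U_i^{\text{monetary}} > U_i^{\text{utility}}$, then moving $\rho$ toward $1$ strictly improves all of them and every interior $\rho$ is Pareto dominated. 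The paper's budget-exhaustion argument silently assumes the constraint is binding, which it need not be (Q-MIA only guarantees $\sum_i p_i^Q \leq B$, possibly with slack), so it does not rule out this case either; you are right to flag it and to propose either a heterogeneity hypothesis on $\{\theta_i\}$ or a retreat to weak Pareto optimality.

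The one step in your argument that needs more than assertion is the claim that $U_i^{\text{monetary}}$ and $U_i^{\text{utility}}$ are not co-monotone across the cohort. The Myerson rent $p_i^Q - c_i$ is the gap between agent $i$'s critical bid and its true cost; it depends on where the next competitor sits in the $\psi$-ordering and on the residual budget, and it is not in general a monotone function of $\theta_i = c_i/(q_i\widehat{\phi}_i)$ alone. So ``low-$\theta_i$ agents extract larger rents'' is a heuristic, not a lemma, and a two-agent counterexample where both differences share a sign is easy to construct. The clean fix is the one you already half-propose: state the opposing-signs condition as an explicit hypothesis of the theorem (there exist $i,j \in S^*$ with $U_i^{\text{monetary}} - U_i^{\text{utility}} > 0 > U_j^{\text{monetary}} - U_j^{\text{utility}}$), under which your sign argument immediately yields Pareto optimality of every $\rho \in [0,1]$, and note separately that without it the frontier degenerates to an endpoint. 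With that amendment your proof is complete and, unlike the paper's, actually correct as written.
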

\begin{proof}
Suppose $\{p_i'\}$ improves some $u_i$ while maintaining others. Due to linearity in $\rho$ and constraint on $B + U$, such a move would violate feasibility.
\end{proof}

\subsection{Dual Decomposition for Optimizing $p_i^Q$ under Constraints}

\[
\max_{S \subseteq N} \sum_{i \in S} \left[\rho (p_i^Q - c_i) + (1 - \rho) s_i^M U\right]
\quad \text{s.t. } \sum p_i \leq B + U
\]

Introduce Lagrangian:
\[
\mathcal{L}(S,\lambda) = \sum_{i \in S} u_i - \lambda \left( \sum p_i - (B + U) \right)
\]

Optimize over $p_i^Q$ and $S$ with sorting by virtual cost $\psi_i$.

\subsection{Risk-Sensitive Utility Modeling}

Let agent $i$ have CRRA utility:
\[
u_i = \mathbb{E}\left[ \frac{(\rho p_i^Q + (1 - \rho) s_i^M U)^{1 - \gamma_i}}{1 - \gamma_i} \right] - c_i
\]
where $\gamma_i$ is the risk aversion coefficient. Require:
\[
\frac{d u_i}{d \hat{c}_i} < 0
\quad \text{for DSIC to hold under risk}
\]

\subsection{DSIC, IR, and Budget Feasibility}

\begin{theorem}[DSIC]
If $q_i$ and $\widehat{\phi}_i$ are verifiable, Mixed-MIA is DSIC for all $\rho \in [0,1]$.
\end{theorem}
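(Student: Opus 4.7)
The plan is to fix an arbitrary report profile $\hat{\mathbf{c}}_{-i}$ and prove, for every agent $i$ and every deviation $\hat{c}_i \neq c_i$, that the hybrid utility $u_i = \rho(p_i^Q - c_i) + (1-\rho)\, s_i^M U$ (with both summands set to zero if $i \notin S^*$) is weakly maximized at $\hat{c}_i = c_i$. Because $u_i$ is affine in $\rho$ with endpoints $u_i^Q$ and $u_i^M$, it suffices to establish DSIC separately at $\rho=1$ and $\rho=0$ under the common selection rule; truth-telling then maximizes every convex combination of the two, yielding DSIC for all $\rho \in [0,1]$.

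For the monetary endpoint I would invoke Lemma~1 essentially verbatim: verifiability of $q_i$ and $\widehat{\phi}_i$ makes the virtual cost $\psi_i$ strictly increasing in $\hat{c}_i$, so the greedy-prefix rule is monotone in the agent's own bid and the Myerson critical payment $p_i^Q$ depends only on $\hat{\mathbf{c}}_{-i}$. For the utility-sharing endpoint I would exploit the same verifiability: agent $i$ cannot manipulate the numerator $q_i \widehat{\phi}_i$ of its share, so the only lever for moving $s_i^M = q_i \widehat{\phi}_i / \Phi$ is through the denominator $\Phi = \sum_{j \in S^*} q_j \widehat{\phi}_j$, which is determined by the composition of $S^*$. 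I would split on whether $i$ remains selected: overreports large enough to eject $i$ kill both summands, and underreports that merely expand $S^*$ weakly enlarge $\Phi$ and so weakly shrink $s_i^M$, so neither direction strictly improves $u_i^M$ over the truthful value.

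The hard part, and in fact the likely point of failure of the theorem as stated, is the intermediate regime in which $i$ overreports enough to consume extra budget under the cap but not enough to be dropped itself; because the Myerson envelope pins $p_i^Q$ to the sup of bids that keep $i$ selected, this deviation leaves the monetary term unchanged while potentially crowding weaker rivals out of $S^*$ and thereby strictly inflating $s_i^M$. To close this case I would attempt two routes in order: first, a structural argument showing that any bid which displaces a rival from the greedy prefix must simultaneously push $i$ past its own critical threshold (so that the displacement and $i$'s continued selection are mutually exclusive under the budget-constrained greedy rule); and failing that, a modification of the Myerson payment to subtract a rebate proportional to $(\Phi_{\text{truth}} - \Phi(\hat{c}_i))_+$, so that any denominator shrinkage induced by $i$'s deviation is monetarily neutralized. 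Either device, combined with the endpoint DSIC statements and the affine-in-$\rho$ decomposition, then yields the claim; I would also record the explicit regularity condition on the greedy rule under which route one suffices, since route two requires augmenting the mechanism beyond what the overview in this section specifies.
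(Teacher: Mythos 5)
Your skeleton coincides with the paper's own proof: the paper argues exactly that increasing $\hat{c}_i$ raises $\psi_i$ and so lowers $p_i^Q$ or ejects $i$, that decreasing $\hat{c}_i$ cannot help because of the critical-payment rule, and then asserts in one clause that ``utility shares $s_i^M$ are data-dependent and not inflated by misreporting.'' The value of your attempt is that you correctly refuse to accept that last clause. The numerator $q_i\widehat{\phi}_i$ of $s_i^M$ is indeed untouchable under verifiability, but the denominator $\Phi=\sum_{j\in S^*}q_j\widehat{\phi}_j$ runs over the \emph{realized} winner set, and under the budget-capped greedy rule the composition of $S^*$ does depend on $\hat{c}_i$: an overreport in $(c_i,\tau_i]$, where $\tau_i$ is $i$'s critical threshold, leaves $i$ selected and leaves the Myerson payment $p_i^Q=\tau_i$ unchanged (it is a function of $\hat{\mathbf{c}}_{-i}$ alone), yet consumes more of the budget and can push a later-ranked agent out of the greedy prefix, strictly shrinking $\Phi$ and strictly raising $(1-\rho)\,s_i^M U$ for every $\rho<1$. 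This is a genuine counterexample template to the theorem as stated, and the paper's proof does not engage with it at all.

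That said, neither of your repair routes closes the gap as written, so your proof is incomplete rather than merely more careful. Route one is false for the greedy rule as specified: with budget $B=10$ and four agents of reported costs $2,2,2,3$ in ascending $\psi$-order, agent $i$ in third position can inflate its cost to $4$ while remaining third; the cumulative spend becomes $2,4,8$, the fourth agent no longer fits, and $i$ is nowhere near its own ejection threshold---so displacement of a rival and $i$'s continued selection are \emph{not} mutually exclusive. Route two (a rebate keyed to the induced shrinkage of $\Phi$) would neutralize the deviation but changes the mechanism, hence proves a different theorem. The correct conclusion to record is that the theorem requires either an added hypothesis (e.g., normalizing shares over all of $N$ rather than over $S^*$, or any selection rule under which $\Phi$ is independent of $\hat{c}_i$ conditional on $i$'s inclusion) or an augmented payment rule. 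Your $\rho=1$ endpoint via Lemma 1, your observation that underreporting weakly enlarges $\Phi$ and so weakly shrinks $s_i^M$, and the affine-in-$\rho$ reduction are all sound and match the paper; the within-selection overreport is exactly where both you and the paper stop short.
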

\begin{proof}
Increasing $\hat{c}_i$ increases $\psi_i$, lowering $p_i^Q$ or excluding $i$ from $S^*$. Decreasing $\hat{c}_i$ doesn't increase $u_i$ due to critical payment logic. Utility shares $s_i^M$ are data-dependent and not inflated by misreporting.
\end{proof}

\begin{theorem}[IR]
Each selected agent receives non-negative utility.
\end{theorem}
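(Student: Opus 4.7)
The plan is to decompose each selected agent's utility into its monetary and participatory components and bound each non-negatively, then invoke linearity of the convex combination in $\rho$. Concretely, under Mixed-MIA the utility of a selected agent $i \in S^*$ takes the form
\[
u_i \;=\; \rho \,(p_i^Q - c_i) \;+\; (1-\rho)\, s_i^M \cdot U,
\]
so it suffices to argue that each summand is non-negative under truthful reporting (which is legitimate to assume because the preceding DSIC theorem for Mixed-MIA has already established truth-telling as a dominant strategy, so $\hat{c}_i = c_i$).

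First I would invoke the Q-MIA IR lemma proved earlier in the Theoretical Guarantees section: since the critical payment $p_i^Q$ is the infimum declared cost at which $i$ remains in $S^*$ given $\hat{\mathbf{c}}_{-i}$, and since the virtual-cost ranking is monotone in $\hat{c}_i$, any selected agent satisfies $p_i^Q \geq \hat{c}_i = c_i$. Hence $U_i^{\text{monetary}} = p_i^Q - c_i \geq 0$. Second, I would show $U_i^{\text{utility}} = s_i^M \cdot U \geq 0$ by noting that $q_i \in [0,1]$, $\widehat{\phi}_i \geq 0$, and $U \geq 0$ are given by the model, while the denominator $\Phi = \sum_{j \in S^*} q_j \widehat{\phi}_j$ is strictly positive whenever $S^*$ is non-empty and selection is made (agents with $q_i \widehat{\phi}_i = 0$ have infinite virtual cost and are never selected), so $s_i^M$ is well-defined and non-negative.

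Combining the two bounds and using $\rho \in [0,1]$, the convex combination $u_i = \rho\, U_i^{\text{monetary}} + (1-\rho)\, U_i^{\text{utility}}$ is non-negative for every $\rho$. The boundary cases $\rho \in \{0,1\}$ reduce to the MUT and Q-MIA IR lemmas respectively, providing a consistency check.

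The main subtlety, rather than a genuine obstacle, is ensuring the normalization $\Phi > 0$; this is handled by the observation above that the greedy selection by virtual cost $\psi_i = \hat{c}_i / (q_i^\eta (\widehat{\phi}_i + \kappa)^\gamma)$ never admits agents with degenerate quality or marginal gain, and by the regularizer $\kappa > 0$ which rules out division-by-zero pathologies in the ranking stage. A secondary subtlety is that IR here is ex-post in the monetary component but only ex-ante (expected) in the participatory component when $U$ is stochastic; if a stronger ex-post IR is required, I would add the mild assumption $U \geq 0$ almost surely, which is already implicit in the model's specification of $U$ as a non-negative utility pool.
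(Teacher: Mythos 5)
Your proposal is correct and follows essentially the same route as the paper, whose entire proof is the one-liner ``$p_i^Q \geq c_i$ and $s_i^M \geq 0$ imply $u_i \geq 0$'' --- i.e., exactly your decomposition into the monetary and participatory components with a convex combination over $\rho$. The additional care you take with the normalization $\Phi > 0$ and the ex-post versus ex-ante distinction for a stochastic $U$ goes beyond what the paper records, but it fills in details rather than changing the argument.
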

\begin{proof}
$p_i^Q \geq c_i$ and $s_i^M \geq 0$ imply $u_i \geq 0$.
\end{proof}

\begin{theorem}[Budget Feasibility]
\[
\sum_{i \in S^*} p_i = \rho \sum p_i^Q + (1 - \rho) U \leq B + U
\]
\end{theorem}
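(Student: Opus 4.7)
The plan is to verify the equality and subsequent bound by direct algebraic manipulation, leveraging two prior results: the normalization of the MUT shares and the Q-MIA budget feasibility lemma. First I would expand $\sum_{i \in S^*} p_i$ via the hybrid payment definition $p_i = \rho p_i^Q + (1-\rho) s_i^M U$, obtaining
\[
\sum_{i \in S^*} p_i = \rho \sum_{i \in S^*} p_i^Q + (1-\rho) U \sum_{i \in S^*} s_i^M.
\]
Because $s_i^M = q_i \widehat{\phi}_i / \sum_{j \in S^*} q_j \widehat{\phi}_j$ is normalized over exactly the selected set, the identity $\sum_{i \in S^*} s_i^M = 1$ holds by construction, so the second term collapses to $(1-\rho) U$ and the stated equality follows.

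For the inequality, I would invoke the Q-MIA budget feasibility lemma proven in the Theoretical Guarantees section, which asserts $\sum_{i \in S^*} p_i^Q \leq B$. Substituting yields $\sum_{i \in S^*} p_i \leq \rho B + (1-\rho) U$, and since $\rho \in [0,1]$ implies both $\rho B \leq B$ and $(1-\rho) U \leq U$, we conclude $\sum_{i \in S^*} p_i \leq B + U$.

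The main subtlety, rather than a genuine obstacle, is ensuring that the Q-MIA bound $\sum p_i^Q \leq B$ transfers to the Mixed-MIA setting. Because Mixed-MIA retains the same greedy virtual-cost ordering and Myerson-style critical payments as Q-MIA — only the disbursement is relabeled as a convex combination with the MUT shares — the monetary component $\{p_i^Q\}$ is determined under identical selection constraints, so the earlier lemma applies verbatim. A brief check that $s_i^M \geq 0$ (immediate from $q_i, \widehat{\phi}_i \geq 0$) ensures no term is spuriously negative, and I would add a short remark handling the degenerate case $\Phi = \sum_{j \in S^*} q_j \widehat{\phi}_j = 0$ (using the convention $s_i^M = 0$, or excluding such agents from $S^*$) to make the bound unconditional.
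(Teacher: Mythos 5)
Your proof is correct and follows essentially the same route as the paper's (one-line) argument: the paper likewise combines the Q-MIA bound $\sum p_i^Q \leq B$ with the normalization $\sum_{i \in S^*} s_i^M = 1$ to conclude the budget bound. Your version simply spells out the algebra and adds sensible remarks on the degenerate case $\Phi = 0$ and the transfer of the Q-MIA lemma to the hybrid setting, which the paper leaves implicit.
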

\begin{proof}
$\sum p_i^Q \leq B$ and $\sum s_i^M = 1$ imply budget bound.
\end{proof}

\section{Deployment Considerations}

To operationalize the Mixed-MIA mechanism in real-world environments, particularly those involving decentralized infrastructures and institutional data providers, careful consideration must be given to the architecture of implementation. This includes the encoding of mechanism logic, the design of verifiable quality signals, the protection of sensitive data, and the integration of data contributions into training pipelines for large language models (LLMs).

\subsection{Contract Design}

The first requirement is the implementation of the Mixed-MIA mechanism’s core logic on a distributed, tamper-resistant infrastructure. The entire auction and allocation process must be encoded in a verifiable and autonomous system that faithfully executes selection and reward functions. The planner receives reported costs, along with externally verifiable quality scores and estimated marginal utilities, and computes a virtual cost for each agent. These virtual costs determine the sorted order in which agents are considered. The selection algorithm greedily includes agents into the final set $S^*$ until the monetary component of rewards saturates the available budget. Payments are then computed using Myerson-style critical values, while the remaining portion of the reward is distributed according to the marginal utility token scheme. The global liquidity parameter $\rho$ determines the proportion of each agent's final compensation allocated as liquid payment versus deferred share. All operations—from selection to payment—can be defined through deterministic computations and recorded publicly, eliminating the possibility of subjective or manipulable decisions.

\subsection{Auditable Quality Oracles}

A critical component of Mixed-MIA is the evaluation of data quality in a reproducible and auditable manner. To this end, participating agents submit their datasets to a public quality-evaluation service that applies deterministic metrics. These metrics should include syntactic cleanliness (e.g., spelling consistency, noise level, and redundancy), content diversity (measured through entropy or clustering analysis over semantic embeddings), novelty (quantified through deduplication analysis or freshness indicators), and metadata richness (such as the completeness of annotations, timestamps, and source identifiers). Each dataset is evaluated via a reproducible pipeline with fixed parameters, and the resulting quality scores are accompanied by cryptographic hashes and timestamps to ensure that the evaluations cannot be modified or selectively disclosed. These oracles must be open source and publicly inspectable to foster community trust and ensure that the same data yields the same score regardless of submission context.

\subsection{Privacy and Legal Concerns}

Many institutional agents operate under strict legal, ethical, or reputational constraints regarding data disclosure. Therefore, it is essential to incorporate privacy-preserving techniques that allow agents to participate without compromising compliance. One approach is to require that datasets be submitted in encrypted form, with decryption keys revealed only upon selection into the winning set. This ensures that data is only made accessible when reward is guaranteed. Alternatively, participants can prove that their data meets quality thresholds through cryptographic methods such as zero-knowledge proofs, thereby preserving the privacy of the underlying content. Moreover, quality evaluation pipelines can be adapted to apply differential privacy, injecting statistical noise to prevent the leakage of sensitive individual-level information. For deferred payment schemes, such as marginal utility tokenization, enforceability must also extend to legal contracts. The data platform or aggregator must be bound by jurisdictional agreements that guarantee token redemption or revenue sharing in downstream model applications.

\subsection{System Integration}

Finally, the Mixed-MIA mechanism must integrate seamlessly into existing AI development pipelines to provide practical value. Upon completion of the auction phase and identification of the winning set $S^*$, selected datasets must be preprocessed into standard formats—such as JSONL, CSV, or TFRecords—used by common LLM training frameworks. These datasets can then be injected into fine-tuning workflows or used in reinforcement learning with human feedback (RLHF) settings. For mechanisms involving deferred rewards based on downstream utility, the deployment platform must maintain logging and attribution infrastructure to monitor API usage, license revenue, or other monetized interfaces linked to the trained model. These logs should be auditable and publicly verifiable, ensuring that the flow of utility back to data contributors is transparent, accurate, and fair. This end-to-end integration ensures that the theoretical incentives embedded in Mixed-MIA mechanisms translate into reliable outcomes within practical machine learning ecosystems.

\section{Limitations and Future Work}

While the Mixed-MIA framework presents a theoretically grounded and practically deployable approach for DSIC-compliant data-sharing in LLM training pipelines, several limitations and opportunities for further research remain.

First, the accuracy and robustness of marginal utility estimators such as influence functions and TracIn are subject to statistical noise, especially in high-dimensional, non-convex model settings. These approximations may produce unstable signals for agents with marginal contributions near zero or may be sensitive to data ordering, adversarial perturbations, or sampling variance. Future work should explore more stable, model-agnostic estimators or calibration methods that account for such variance while preserving incentive alignment.

Second, our mechanism currently assumes a one-shot contribution model, where each data provider submits a static dataset. However, real-world data ecosystems are dynamic, and contributions may evolve over time. Extending the framework to handle sequential or time-dependent data streams would enable the support of long-term engagement models. This could involve mechanisms that update reward allocation dynamically or incorporate versioned contributions with decay-adjusted utility over time.

Third, although our evaluation includes theoretical analysis, real-world pilots are essential to test the mechanisms under practical constraints. Collaborations with academic institutions, open-source LLM communities, or data-rich nonprofit sectors (e.g., biomedical text repositories or educational corpora) could provide valuable empirical insights into agent behavior, verification bottlenecks, and infrastructure challenges.

Finally, the current model assumes myopic agents who evaluate their utilities independently of future interactions. In practice, strategic data contributors may act non-myopically, considering long-term reputational effects, repeated interactions, or strategic collusion. Investigating dynamic mechanisms that incorporate agent memory, identity verification, or reputational scoring could strengthen robustness in long-lived ecosystems.

\section{Conclusion}

In this work, we proposed a novel mechanism design framework for incentivizing strategic data contribution in large-scale language model development. Our primary objective was to design mechanisms that are Dominant Strategy Incentive Compatible (DSIC), Individually Rational (IR), and budget-feasible, while remaining robust to unverifiability and misreporting.

We introduced two complementary mechanisms: the Quality-Weighted Marginal-Incentive Auction (Q-MIA), which uses verifiable quality scores and estimated marginal gains to allocate monetary rewards; and the Marginal Utility Token (MUT) contract, which distributes long-term participatory stakes in downstream utility. To unify these approaches, we developed the Mixed-MIA mechanism—a hybrid scheme that interpolates between liquidity-driven and cooperative incentive regimes via a tunable parameter $\rho$.

Our framework accommodates heterogeneous agent behavior, quality-sensitive reward rules, and approximate marginal utility estimation, and is deployable via decentralized infrastructure using auditable oracles and contracts. We further proposed the Data Share Token (DST) allocation metric to generalize scoring across volume, quality, and impact dimensions. Through formal proofs, we established DSIC, IR, and budget feasibility, and explored practical extensions including quality oracles, auditability, privacy preservation, and LLM integration.

By aligning economic incentives with learning utility and data quality, Mixed-MIA mechanisms offer a scalable and trust-minimized foundation for data contribution in modern AI ecosystems. We believe this approach represents a step toward sustainable and participatory language model development, and we hope it catalyzes further interdisciplinary research at the intersection of game theory, mechanism design, and decentralized AI infrastructure.

\bibliography{sn-bibliography}

\break

\begin{appendices}

\section{Notation and Symbol Glossary}
\label{app:notation}
\begin{table}[h!]
\centering
\begin{tabular}{@{}ll@{}}
\toprule
Symbol & Description\\
\midrule
$N$ & Set of data providers, $|N| = n$\\
$D_i$ & Private dataset held by agent $i$\\
$c_i$ ($\hat c_i$) & True (reported) opportunity cost of $D_i$\\
$q_i\!\in\![0,1]$ & Verifiable composite quality score of $D_i$\\
$\widehat{\phi}_i$ & Estimated marginal utility of $D_i$\\
$\psi_i$ & Virtual cost $\frac{\hat c_i}{q_i^\eta(\widehat{\phi}_i+\kappa)^{\gamma}}$\\
$B$ & Planner’s monetary budget (Q-MIA)\\
$U$ & Future utility pool (MUT / Mixed-MIA)\\
$\rho$ & Liquidity factor in Mixed-MIA ($0\le\rho\le1$)\\
$S^\star$ & Selected winner set returned by the mechanism\\
$p_i$ & Final payment to agent $i$\\
$s_i$ & Utility-share weight in token contracts\\
\bottomrule
\end{tabular}
\end{table}

\section{Pseudocode Listings}
\label{app:pseudocode}

\subsection{Quality-Weighted Marginal-Incentive Auction (Q-MIA)}

\begin{algorithm}[H]
\caption{Q-MIA}
\label{alg:qmia}
\begin{algorithmic}[1]
\Require Budget $B$, parameters $(\eta,\gamma,\kappa)$, \\
\hfill\phantom{Require }agents’ reports $\{\hat c_i, q_i, \widehat{\phi}_i\}_{i=1}^n$
\Ensure Winner set $S^\star$, payments $\{p_i\}$
\State Compute $\psi_i\leftarrow\hat c_i\,/\,\bigl(q_i^\eta(\widehat{\phi}_i+\kappa)^\gamma\bigr)$
\State Sort agents by ascending $\psi_i$
\Statex \textit{Greedy selection}
\State $S^\star\leftarrow\emptyset$, $C\leftarrow0$
\For{agent $i$ in sorted order}
    \If{$C+\hat c_i\le B$}
        \State $S^\star\leftarrow S^\star\cup\{i\}$; $C\leftarrow C+\hat c_i$
    \Else
        \State \textbf{break}
    \EndIf
\EndFor
\Statex \textit{Critical payments}
\For{$i\in S^\star$}
    \State Find smallest $\tilde c$ s.t.\ $i$ would still be selected
    \State $p_i\leftarrow \rho\,\tilde c + (1-\rho)\,U\cdot s_i$ \Comment{$\rho{=}1$ in pure Q-MIA}
\EndFor
\end{algorithmic}
\end{algorithm}

\subsection{Mixed-MIA Allocation of Hybrid Rewards}

\begin{algorithm}[H]
\caption{Mixed-MIA (high-level)}
\label{alg:mixed}
\begin{algorithmic}[1]
\Require Budget $B$, utility pool $U$, liquidity $\rho$
\State Run Algorithm~\ref{alg:qmia} with $\rho=1$ to obtain $\{p_i^{Q}\}$
\State Compute $s_i=\dfrac{q_i\widehat{\phi}_i}{\sum_{j\in S^\star}q_j\widehat{\phi}_j}$
\For{$i\in S^\star$}
    \State $p_i\leftarrow\rho\,p_i^{Q}+(1-\rho)\,U\,s_i$
\EndFor
\end{algorithmic}
\end{algorithm}

\section{Quality Metric Implementation Details}
\label{app:quality}

\paragraph{Cleanliness.}  
Count character-level corruption and dictionary mismatches after normalisation; compute $\mathrm{Clean}=1-\min(1,k\!\cdot\!\text{error\_rate})$ with $k=50$.

\paragraph{Diversity.}  
Embed each document with a pretrained MiniLM;
cluster using $k$-means ($k=50$) and evaluate
$H=\!-\,\sum_{j=1}^{k}p_j\log p_j$; normalise to $[0,1]$.

\paragraph{Novelty.}  
Compute cosine similarity $\sigma$ between each sentence and a 5-million-sentence public corpus.  
$\mathrm{Novelty}=1-\mathrm{mean}(\sigma)$; clipped to $[0,1]$.

\paragraph{Metadata Richness.}  
\(
\text{Meta}= \frac{\text{\# filled fields}}{\text{\# required fields}}.
\)

The final score is $q_i=w_1\mathrm{Clean}+w_2 H+w_3\mathrm{Novelty}+w_4\text{Meta}$ with $(w_1,w_2,w_3,w_4)=(0.25,0.25,0.25,0.25)$ unless otherwise stated.

\section{Marginal Utility Estimation Pipeline}
\label{app:marginal}

\begin{enumerate}
\item Train base LLM for $e_0$ epochs on public seed corpus.
\item \textbf{Influence-Function Approximation:}\\
For each $i$, compute $\widehat{\phi}_i\approx -g_i^\top H^{-1}\bar g$ where  
$g_i$ is the gradient of loss w.r.t.\ parameters on $D_i$ and
$H^{-1}\bar g$ is pre-computed with Hutchinson trace estimation.
\item Rescale $\widehat{\phi}_i$ linearly to $[0,1]$ and set floor $10^{-3}$.
\item Log intermediate tensors and random seeds for auditability.
\end{enumerate}

\section{Ethical and Societal Impact Statement}
\label{app:ethics}

Our mechanisms seek to democratise access to LLM training by rewarding diverse data owners.  Nevertheless, selection biases in $q_i$ or $\widehat{\phi}_i$ might disadvantage low-resource languages or under-represented domains.  
To mitigate this risk we (i) publish scoring code, (ii) allow community veto on metric weights, and (iii) reserve 10 \% of budget for a \emph{diversity tranche} prioritising under-served providers.

\vspace{.5em}
\noindent\textbf{Privacy.}  All quality evaluation can be executed under differential privacy or homomorphic hashing; raw data remain encrypted until a binding contract is signed.

\end{appendices}

\end{document}